\newtheorem{theorem}{Theorem}[section]
\newtheorem{lemma}[theorem]{Lemma}
\newtheorem{proposition}[theorem]{Proposition}
\newcommand{\N}{\mathbb{N}}
\newcommand{\R}{\mathbb{R}}
\newcommand{\C}{\mathbb{C}}
\theoremstyle{definition}
\newtheorem{definition}[theorem]{Definition}
\newtheorem*{remark}{Remark}
\numberwithin{equation}{section}
\begin{document}

\title{Spark-level sparsity and the $\ell_1$ tail minimization}

\author{Chun-Kit Lai}
\address{Department of Mathematics, San Francisco State University\\
San Francisco, CA 94132}
\email{cklai@sfsu.edu}

\author{Shidong Li}
\address{Department of Mathematics, San Francisco State University\\
San Francisco, CA 94132}
\email{shidong@sfsu.edu}

\author{Daniel Mondo}
\address{Department of Mathematics, San Francisco State University\\
San Francisco, CA 94132}
\email{dmondo@mail.sfsu.edu}

\begin{abstract}

Solving compressed sensing problems relies on the properties of sparse signals.  It is commonly assumed that the sparsity $s$ needs to be less than one half of the spark of the sensing matrix $A$, and then the unique sparsest solution exists, and recoverable by $\ell_1$-minimization or related procedures.  We discover, however, a measure theoretical uniqueness exists for nearly spark-level sparsity from compressed measurements $Ax=b$. Specifically, suppose $A$ is of full spark with $m$ rows, and suppose $\tfrac{m}{2} < s < m$. Then the solution to $Ax=b$ is unique for $x$ with $\|x\|_0\leq s$ up to a set of measure 0 in every $s$-sparse plane.  This phenomenon is observed and confirmed by an $\ell_1$-tail minimization procedure, which recovers sparse signals uniquely with $s > \tfrac{m}{2}$ in thousands and thousands of random tests.   We further show instead that the mere $\ell_1$-minimization would actually fail if $s > \tfrac{m}{2}$ even from the same measure theoretical point of view.


\end{abstract}
\maketitle

\section{Introduction} \label{S:introduction}

Compressed sensing is a problem that arises very naturally in signal processing applications. A sparse signal $x\in \mathbb{C}^N$ (a vector consisting of only $s<<N$ non-zero entries) is detected by a sensing matrix $A \in \C^{m \times N}$ with $m < < N$. The goal is to recover the exact sparse vector $x$ from the clearly under-determined/compressed measurements $b= A x$. In applications, the sensing matrix $A$ is treated as a physical device taking linear
measurements of our signal $x$, and we then think of $b$ as an undersampled measurement of $x$.  
Representative works include, for instance, \cite{candes2006robust} and \cite{donoho2006high}. One may also find, e.g., \cite{baraniuk2007compressive}, \cite{cande2008introduction}, \cite{candes2011compressed}, \cite{candes2006compressive}, \cite{candes2006robust}, \cite{candes2006near}, \cite{foucart2013mathematical}, \cite{fornasier2011compressive}, \cite{rauhut2010compressive}, and \cite{romberg2008imaging} for comprehensive descriptions. For some visualizable applications of compressed sensing theory, there are among others, \cite{candes2006robust}, \cite{haldar2011compressed}, \cite{lustig2007sparse} on  magnetic resonance imaging (MRI), and \cite{ender2010compressive}, \cite{fannjiang2010compressed}, \cite{herman2009high}, \cite{potter2010sparsity}, etc. on radar imaging.

\medskip

The task of recovering $x$ can be recast as a minimization problem:
\begin{equation}\label{P_0}
(\ell_0): \ \  \underset{x} {\text{min}} \ \|x\|_0 \ \ \text{subject to} \ b = Ax.
 \end{equation}
This is an NP-hard combinatorial problem (c.f. \cite{ge2011note}), and so the convex relaxation,
\begin{equation}\label{P_1}
(\ell_1): \ \  \underset{x} {\text{min}} \ \|x\|_1 \ \ \text{subject to} \ b = Ax,
 \end{equation}
is typically solved instead. Here $\|x\|_0= \#\{i:x_i\ne 0\}$ and $\|x\|_1 = \sum_{i=1}^N|x_i|$.  We will refer to \eqref{P_1} as the {\it basis pursuit} problem, \cite{chen2001atomic}.  Although numerical experiment suggests that the solution to \eqref{P_0} and \eqref{P_1} are equivalent, the first theoretical justification was given by Candes et al who proved that solutions to \eqref{P_0} and \eqref{P_1} are  equivalent with high probability for random matrices such as Gaussian random matrices  as long as $m \geq C \cdot s \ln N$ (\cite{candes2006compressive} \cite{candes2006near}), for some constant $C$. For a more general discussion about the relationship between $(\ell_0)$ and $(\ell_1)$, reader can refer to  \cite{donoho2003optimally}, \cite{donoho2006stable}, \cite{candes2006compressive}, \cite{rauhut2010compressive} or \cite{rudelson2008sparse}, \cite{rauhut2007random} and \cite{rauhut2008stability}, and the references therein. We will not be discussing methods other than basis pursuit for signal recovery in this paper, but they do exist: \cite{kim2007interior}, \cite{chen2001atomic}, \cite{dai2009subspace}, \cite{needell2009cosamp}, \cite{donoho2012sparse}, among others.

 \medskip

Naturally, the problem of compressed sensing makes sense only if a unique solution to the ($\ell_0$) problem exists and we wish to recover it by different algorithms. In general, the greater the number of non-zero entries of $x$ is, the more difficult the signal recovery it would be despite the uniqueness of the solution to ($\ell_0$). Furthermore,  it is well-known that we need to require $s\le m/2$, or $s$ less than half of the spark of $A$, for the unique $s$-sparse solution to exist  (see for example \cite{candes2006robust}, \cite{donoho2006high}).

\medskip

We discover however that the uniqueness of $(\ell_0)$ is possible with full Lebesgue measure even when $\tfrac{m}{2}<s<m$.  Surprisingly, we further demonstrate that a tail minimization procedures works well in this regime.  

\medskip

\subsection*{Compressed sensing with frames} Our investigation was originally motivated from a more involved problem of compressed sensing with sparse frame representations. Recall that a frame is a set of vectors $\left\{f_k\right\}$ in an inner product space $V$ such that there exists $0<A\le B<\infty$ with the property that
$$
\forall v \in V, \ A\|v\|^2 \leq \sum_{k \in \N} |\langle v, f_k \rangle |^2 \leq B \|v\|^2.
$$
For a general reference on frames, see, e.g.,  \cite{christensen2013introduction}. Compressed sensing with frames is explored in e.g.  \cite{aldroubi2011stability}, \cite{candes2011compressed}, \cite{chen2014null}, \cite{davenport2013signal}, \cite{giryes2013can}, \cite{gribonval2007highly}, and \cite{rauhut2008compressed}. In particular, \cite{rauhut2008compressed} establishes a condition under which signal recovery with frames succeeds. Let $D$ be the matrix whose column vectors are the frame vectors $\left\{f_k\right\}$. It is known that there are signals $f$ in practice that are naturally sparse in a frame representation \cite{zeng2016sparse}, namely $f=Dx$ and $x$ is sparse.  When coupled with compressed sensing methodologies, the under-determined matrix $A$ measures the signal $f$ by $b=ADx$.  The task is to recover $f$ from the known $b$, $A$ and $D$.

\medskip

The two typical approaches are the $\ell_1$-synthesis problem, see, e.g., \cite{elad2007analysis},
\begin{equation}\label{synthesis} \underset{x} {\text{min}} \ \|x\|_1 \ \ \text{subject to}\  b = ADx,
\end{equation}
or the $\ell_1$-analysis problem, e.g., \cite{candes2011compressed},
\begin{equation}\label{analysis} \underset{f} {\text{min}} \ \|D^*f\|_1 \ \ \text{subject to} \ b =Af.
\end{equation}
Note that when $D$ is actually a basis these methods are equivalent.

\medskip

An important object of study when solving these signal recovery problems is the error bound on a given recovered signal $\hat{f}$. The $\ell_1$-analysis approach has an error bound given by, under appropriate D-RIP condition of $A$, see, \cite{candes2011compressed}, \cite{liu2012compressed}:
\begin{equation}
\|\hat{f} - f\|_2 \leq C_0 \cdot \epsilon + C_1 \cdot \frac{\|{\tilde D}^*f - ({\tilde D}^*f)_s\|_1}{\sqrt{s}},
\end{equation}
where ${\tilde D}$ is a dual frame of $D$. Notice in particular that the error bound depends linearly on the \textit{tail} (smallest $d - s$ entries) of the signal: $\|{\tilde D}^*f - ({\tilde D}^*f)_s\|_1$.

\medskip

Since the error bound is directly proportional to the tail coefficients, minimizing the tail directly is a worthy topic of study, namely,
\[
\min_f \|{\tilde D}^*f - ({\tilde D}^*f)_s\|_1 \ \ \ \ \text{subject to}  \ b=Af.
\]
One immediately sees, however, the above minimization problem is non-convex.  Our next natural choice is to work with an iterative approach, where at each step, we identify an estimated support set $T$ and and solve the following \textit{tail minimization problem}:
\begin{equation} \label{tail-min}
\underset{f} {\text{min}} \ \|(D^*f)_{T^C}\|_1 \ \ \text{subject to} \ b =Af,
\end{equation}
where $T$ is the estimated support of $D^*f$, and $T^C$ is the complement of $T$.

\medskip

The procedure can have a number of variations.  The simplest case is similar to the iterative hard thresholding approach, \cite{blumensath2008iterative}. The hard thresholding step is to find an estimated support index $T$, and then solve the ``tail-min'' problem (\ref{tail-min}), followed by hard thresholding again for the next $T$, etc. Such a test was also performed over the traditional compressed sensing problem in a similar tail-minimization principle:
\begin{equation}\label{TMin}
\underset{x} {\text{min}} \ \|x_{T^C}\|_1 \ \ \text{subject to} \ b = Ax. \end{equation}
To be precise, we do the following:

\medskip

\noindent \textbf{Tail minimization algorithm}

\begin{enumerate}

\item Inputs: a matrix $A \in \C^{m \times N}$ and measurement vector $b \in \C^m$. Starting at iteration one, let the support $T_0 = \left\{1,...,N\right\}$. At the first  iteration, we solve the basis pursuit problem  \ref{P_0} to obtain an initial approximation $\hat{x}_1$ for our signal.

\item  Find the index set of the $s$-largest elements of $\hat{x}_1$, call it $T_1$. Solve the tail minimization problem \begin{equation}\label{TM1} \underset{x} {\text{min}} \ \|x_{T_1^C}\|_1 \ \ \text{subject to} \ b = Ax \end{equation}  or \begin{equation}\label{TM2} \underset{f} {\text{min}} \ \|(D^*f)_{T_1^C}\|_1 \ \ \text{subject to} \ b =Af.
\end{equation} 

\item Continue for k steps: find the index set of the $s$-largest elements $T_{k-1}$ of $\hat{x}_{k-1}$. Solve the tail minimization problem (\ref{TM1}) or (\ref{TM2}) with $T_{k-1}$. Call the solution at this iteration $\hat{x}_k$.

\item The algorithm terminates when successive iterations differ by a small enough constant:  $\| \hat{x}_k - \hat{x}_{k-1}\|_2 < \epsilon$.

\end{enumerate}

\medskip

%
%
%
%
%
%
%

\medskip

Figures \ref{TMPlot} and \ref{fig2} below demonstrate  extensive random tests and a statistics of successful recovery rate versus the sparsity level are presented for the ``tail-min'' procedure.  Also tested and plotted are the ($\ell_1$) basis pursuit results, for both conventional compressed sensing, and that with sparse frame representations.

\medskip

For the conventional compressed sensing problem, we let $A$ be a Gaussian random matrix. We apply both procedures, basis pursuit and tail minimization, to $s$-sparse signals $x$, and let $s$ increase until both procedures fail with certainty. Below are the results for $n = 1000$ trials.


\begin{figure}[h!]\label{TMPlot}
\centering
\includegraphics[width=.8\textwidth]{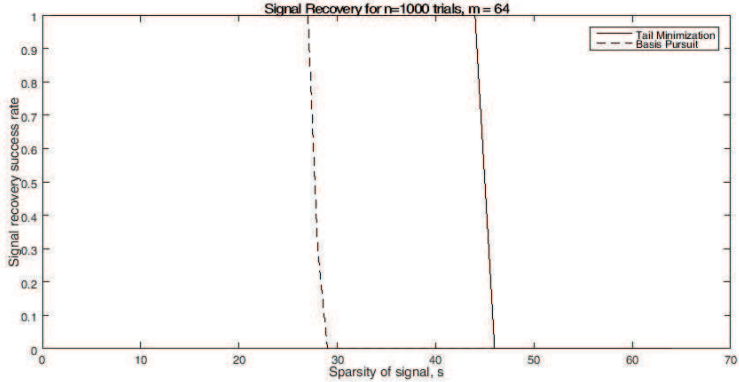}
\caption{Let $A \in \R^{64 \times 128}$ be a Gaussian random matrix, $x \in \R^{128}$. We plot the sparsity $s$ of the signal $x$ against the fraction of successful signal recovery for $n = 1000$ trials. The dotted line is the traditional basis pursuit (\ref{P_0}) and the solid line is the tail minimization procedure (\ref{TMin}). Signal recovery is considered a success if the relative error is less than a given tolerance, $\epsilon = 10^{-6}$.}
\end{figure}

\medskip

\begin{figure}[h!]
\centering
\includegraphics[width=.8\textwidth]{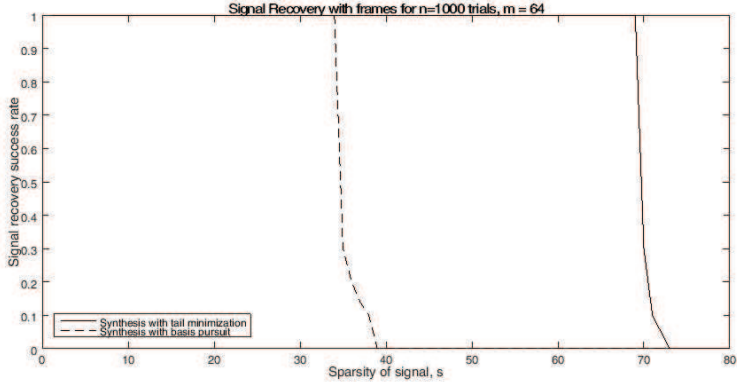}
\caption{Let $A \in \R^{64 \times 128}$ be a Gaussian random matrix, $D \in \C^{128 \times 256}$ a Fourier frame, and $x \in \R^{256}$. We plot the sparsity $s$ of the signal $x$ against the fraction of successful signal recovery for $n = 1000$ trials. The dotted line is the $\ell_1$-analysis problem (\ref{analysis}) and the solid line is the tail minimization procedure (\ref{tail-min}). Signal recovery is considered a success if the relative error is less than a given tolerance, $\epsilon = 10^{-6}$.}
\label{fig2}
\end{figure}

\medskip

It turns out that the procedure is not only doing well, but also greatly exceeding our expectation.  In fact, for vectors $x$ whose number of non-zeros $s$ greatly exceeding $m/2$ for a full rank matrix $A$ with $m$ rows, the ``tail-min'' procedure still recovers them all well, uniquely, in a massive amount of random testing.  It is widely known that the recovery of signals with $\frac{m}{2} < s < m$ is problematic since the ($\ell_0$) problem does not have unique solution. Specifically, consider a full-rank sensing matrix $A$ --- that is, rank$(A) = m$. The following is a well-known result (see e.g. \cite{bruckstein2009sparse} or \cite[Theorem 2.13]{FR13}):
\begin{lemma}
Every $s$-sparse vector is a unique solution of (1.2) if and only if every $2s$-columns of $A$ are linearly independent.
\end{lemma}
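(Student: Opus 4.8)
The plan is to prove both directions by an elementary support-counting argument, using the single observation that if $x$ and $x'$ are both $s$-sparse then $x-x'$ is supported on a set of at most $2s$ indices, so that the relation $A(x-x')=0$ is nothing but a linear dependence among (at most) $2s$ columns of $A$. Throughout, ``$x$ is a unique solution of $Ax=b$'' should be read as ``$x$ is the only $s$-sparse vector $z$ with $Az=b$'', and I work under the (customary) standing assumption $2s\le N$, which makes the phrase ``every $2s$ columns of $A$'' non-vacuous.

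For the \emph{if} direction, assume every $2s$ columns of $A$ are linearly independent and let $x$ be an arbitrary $s$-sparse vector, $b=Ax$. If $x'$ is another $s$-sparse vector with $Ax'=b$, then $A(x-x')=0$; since $\mathrm{supp}(x-x')\subseteq\mathrm{supp}(x)\cup\mathrm{supp}(x')$ has cardinality at most $2s$, this exhibits a vanishing linear combination of at most $2s$ columns of $A$ with coefficients the entries of $x-x'$. Independence of those columns forces $x-x'=0$, so $x$ is the unique $s$-sparse solution.

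For the \emph{only if} direction I would argue by contraposition. Suppose some index set $S$ with $|S|=2s$ indexes linearly dependent columns of $A$; then there is a nonzero vector $v$ with $\mathrm{supp}(v)\subseteq S$ and $Av=0$. Split $S=S_1\sqcup S_2$ with $|S_1|=|S_2|=s$ and set $x_1=v|_{S_1}$, $x_2=-v|_{S_2}$ (the entries of $v$ on the indicated index set, zero elsewhere). Then $x_1$ and $x_2$ are $s$-sparse, $x_1-x_2=v\neq 0$ so $x_1\neq x_2$, and $Ax_1-Ax_2=Av=0$, i.e.\ $Ax_1=Ax_2$. Hence the $s$-sparse vector $x_1$ is not the unique $s$-sparse solution of $Az=Ax_1$, contradicting the hypothesis that every $s$-sparse vector is a unique solution of its own measurement equation. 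This proves the contrapositive, hence the ``only if'' direction.

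Since this is a textbook fact (essentially \cite[Theorem 2.13]{FR13}), I do not expect a real obstacle. The one point worth a sentence of care is the degenerate regime $2s>N$: there ``every $2s$ columns of $A$ are linearly independent'' is vacuously true, yet a nonzero null vector of $A$ (which exists since $m<N$) still splits into two distinct $s$-sparse vectors with equal image, so uniqueness fails. This is exactly why one keeps the standing assumption $2s\le N$ (equivalently, reads the column hypothesis as referring to every $\min(2s,N)$ columns), after which the two arguments above are precise converses of one another.
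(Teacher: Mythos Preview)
Your proof is correct and is the standard textbook argument. The paper itself does not prove the lemma --- it is stated with a citation to \cite[Theorem~2.13]{FR13} --- but the support-splitting you use for the ``only if'' direction is precisely the idea the paper illustrates in the example immediately following the lemma (writing a null vector $v$ as $v_{S_1}+v_{S_2}$ and noting $A v_{S_1}=A(-v_{S_2})$). Your reading of the statement as $(\ell_0)$ uniqueness rather than the literal $(\ell_1)$ problem~(1.2) is the intended one; the reference to (1.2) is a typo for (1.1).
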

Consequently, if $s > \frac{m}{2}$, we cannot distinguish all $s$-sparse vectors in general. For example, if $A \in \C^{64 \times 128}$, then any $65$ columns must be linearly dependent. Hence, there exists $v\in \mbox{ker}(A) \setminus \left\{0\right\}$ such that $Av=0$ and the support of $v$ has at most 65 non-zero entries. Decompose $v = v_{S_1}+v_{S_2}$ where $|S_1| = 50$, $|S_2| = 15$, and $S_1 \cap S_2 = \emptyset$. Then $A(v_{S_1}) = A(-v_{S_2})$, which shows that there is not enough information to distinguish a 50-sparse vector $v_{S_1}$ and a 15-sparse vector $v_{S_2}$.


\medskip

\subsection*{Theoretical contribution} The statistically 100\% recovery with $\tfrac{m}{2} < s < m$ for the ``tail-min'' procedure in \textit{Figures  \ref{TMPlot}} and {\em \ref{fig2}} above initiates us to investigate the reason why it happens.  It turns out, a measure theoretical uniqueness solution exists for the ($\ell_0$) problem for $\tfrac{m}{2} < s < m$ with full spark matrices $A$ (see section 2 for a definition of full spark). We will prove that, given any $s$-sparse plane with $s<m$, the solution to $b=Ax$ is unique for $x$ with $\|x\|_0\leq s$ up to a set of measure 0 in the $s$-sparse coordinate plane (Theorem \ref{MainTheorem1}).

\medskip

On the other hand, we show that, when $A\in {\mathbb R}^{m\times N}$, the traditional basis pursuit fails on a set of infinite measures in some $s$-sparse plane (See Section 3).

\medskip

It comes to our attention during the investigation that \cite{wang2010sparse} presents a similar tail minimization procedure, though the analysis of the algorithm is quite different than what we go about. Our recoverability studies and the convergence analysis of the tail minimization algorithm will be presented in forthcoming articles.  
\medskip

For the rest of the paper, we will prove our measure theoretic uniqueness theorem in Section 2 and prove the failure of the basis pursuit in Section 3, both for the near spark-level sparsity.

\section{Uniqueness of Solution of $(\ell_0)$}

The \textit{spark} of a matrix $A$ is the smallest number of linearly dependent columns of $A$. The mathematical definition of spark can be written as
$$
\mbox{spark}(A) = \min\{\|x\|_0: Ax=0, \ x\ne 0\}.
$$
We say that $A \in \C^{m \times N}$ is \textit{full-spark} if any $m$ columns of $A$ are linearly independent, i.e. spark$(A) = m+1$. Full-spark frames exist almost everywhere under many probability models and it is a dense and open set in the Zariski topology in the sense that its complement is a finite union of zero sets of polynomials \cite{Full_Spark} . Given $A \in \C^{m \times N}$, for $T \subseteq \left\{1,...,m\right\}$ we define $A_T $ be the submatrix in ${\mathbb C}^{m\times|T|}$ formed by taking columns indexed by $T$. The coordinate plane indexed by $T$ is defined as
\begin{equation}
 H_T = \left\{(x_1,...,x_m): x_i \neq 0 \text{ for } i \in T;\  x_i = 0 \text{ for } i \notin T \right\}.
 \end{equation}
 We collect all possible $H_T$ of sparsity $s$ as follows:
\begin{equation}
\mathcal{H}_s = \left\{H_T: |T| = s\right\}.
 \end{equation}

\medskip

We know that a full spark matrix can recover all $s$-sparse signals by $(\ell_0)$ if $s<m/2$. The following theorem suggests that we can in principle recover almost all $s$ sparsity signals as long as $s<m$.
\begin{theorem}\label{MainTheorem1}
Let $\frac{m}{2} < s < m$ and $A \in \C^{m \times N}$ be full-spark. Let $H_{T_0}$ be any hyperplane in $\mathcal{H}_s$. Then for almost everywhere $x \in H_{T_0}$ (with respect to the $s$-dimensional Lebesgue measure on $H_{T_0}$), $x$ is the unique solution of $(\ell_0)$. That is, for $Ax = Av$, $v$ $s$-sparse, we have $x =v$ for almost everywhere $x \in H_{T_0}$.
\end{theorem}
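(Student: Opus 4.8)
The plan is to bound the $s$-dimensional Lebesgue measure of the \emph{bad set}
\[
  B=\bigl\{\,x\in H_{T_0}:\ \exists\,v\neq x,\ \|v\|_0\le s,\ Av=Ax\,\bigr\}
\]
and show it is zero; the complement of $B$ in $H_{T_0}$ is then precisely the set of $x$ that are the unique $(\ell_0)$ solution of $Ax=b$. Since any $v$ with $\|v\|_0\le s$ has $\operatorname{supp}(v)\subseteq T_1$ for some index set $T_1$ with $|T_1|=s$, and there are only finitely many such $T_1$, it suffices to show that for each fixed $T_1$ the set
\[
  B_{T_1}=\bigl\{\,x\in H_{T_0}:\ \exists\,v\neq x,\ \operatorname{supp}(v)\subseteq T_1,\ Av=Ax\,\bigr\}
\]
is Lebesgue-null in $H_{T_0}$. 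When $T_1=T_0$ this set is empty, since $v-x\in\ker(A_{T_0})$ and $A_{T_0}$ has trivial kernel ($A$ is full spark and $|T_0|=s<m$), forcing $v=x$. So I would then fix $T_1\neq T_0$, set $S=T_0\cup T_1$ and $r=|T_0\cap T_1|\le s-1$, so that $|S|=2s-r$ and $|T_0\setminus T_1|=s-r\ge1$.

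The heart of the argument is a dimension count. If $x\in B_{T_1}$ with witness $v$, then $w:=x-v$ is a nonzero vector of $\ker(A)$ supported in $S$, hence $w\in\ker(A_S)$; and since $v$ vanishes off $T_1$ while $x$ vanishes off $T_0$, we get $x_i=w_i$ for every $i\in T_0\setminus T_1$. Letting $\pi\colon\ker(A_S)\to\C^{T_0\setminus T_1}$ be restriction to those coordinates, and identifying $H_{T_0}$ with $\C^{T_0}=\C^{T_0\setminus T_1}\times\C^{T_0\cap T_1}$, this shows
\[
  B_{T_1}\ \subseteq\ \pi\!\bigl(\ker A_S\bigr)\times\C^{T_0\cap T_1}.
\]
It therefore suffices to check that $\pi(\ker A_S)$ is a \emph{proper} subspace of $\C^{T_0\setminus T_1}$: then $B_{T_1}$ sits inside a set of the form (measure-zero subspace)\,$\times\,\C^{T_0\cap T_1}$, which is null in $H_{T_0}$. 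Here full spark enters: $\operatorname{rank}(A_S)=\min(|S|,m)$, so if $|S|\le m$ then $\ker(A_S)=\{0\}$ (and $B_{T_1}=\varnothing$), while if $|S|>m$ then $\dim\pi(\ker A_S)\le\dim\ker(A_S)=|S|-m=2s-r-m$, and the hypothesis $s<m$ gives $2s-r-m<s-r=\dim\C^{T_0\setminus T_1}$. Hence $\pi(\ker A_S)$ is proper in every case, $B_{T_1}$ is null, and taking the union over the finitely many admissible $T_1$ finishes the proof. (Everything goes through verbatim for $A\in\R^{m\times N}$, with $\R$ replacing $\C$.)

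The one thing I would be careful about is the bookkeeping rather than any single hard estimate: enumerating competitors $v$ directly produces an \emph{uncountable} union of proper affine subspaces of $H_{T_0}$ (one per $w\in\ker A_S$), which by itself need not be null, so it is essential to group competitors by their support $T_1$ — finitely many — and to package the whole family into the single ``projected-kernel'' set $\pi(\ker A_S)\times\C^{T_0\cap T_1}$ before estimating. Once that reorganisation is in place, the substance of the proof is exactly the inequality $\dim\ker(A_S)<|T_0\setminus T_1|$, i.e. precisely the condition $s<m$.
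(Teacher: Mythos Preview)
Your proof is correct and follows essentially the same approach as the paper: decompose the bad set by the competing support $T_1$, dispose of the case $|T_0\cup T_1|\le m$ via full spark, and reduce the remaining case to the dimension inequality $\dim\ker A_{T_0\cup T_1}=|T_0\cup T_1|-m<|T_0\setminus T_1|$, which is exactly $s<m$. The only cosmetic difference is packaging: you argue directly by exhibiting the containment $B_{T_1}\subseteq\pi(\ker A_S)\times\C^{T_0\cap T_1}$ with $\pi(\ker A_S)$ proper, whereas the paper argues by contradiction (positive measure forces $H_{T_0}\subseteq H_T+\ker A$) and then builds an injective map $H_{T_0\setminus T}\hookrightarrow\ker A\cap H_{T_0\cup T}$ --- the two maps are dual viewpoints on the same rank--nullity count.
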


\medskip

In order to prove the theorem, we require some lemmas:

\begin{lemma}\label{lem3.1}
Let $x$ be $s$-sparse. Then $x$ is the unique solution of $(\ell_0)$ if and only if
\begin{equation}\label{Eq3.1}
x \notin \underset{|T| \leq s}\bigcup H_T + (\ker{A} \setminus \left\{0\right\}).
\end{equation}
\end{lemma}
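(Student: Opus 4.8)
The plan is to characterize unique solvability of $(\ell_0)$ for a fixed $s$-sparse vector $x$ purely in terms of the position of $x$ relative to the kernel of $A$, and then to unwind the definitions. First I would spell out what it means for $x$ to fail to be the unique $(\ell_0)$-solution: there is some $s$-sparse vector $v \neq x$ with $Av = Ax$. Setting $w = x - v$, this says exactly that there exists a nonzero $w \in \ker A$ such that $v = x - w$ is again $s$-sparse, i.e. $x - w \in H_T$ for some $|T| \le s$ (allowing $|T| < s$ to cover the case where cancellation drops the support size). Equivalently, $x \in H_T + w$ for some $|T| \le s$ and some nonzero $w \in \ker A$. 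Taking the union over all admissible $T$ and all nonzero kernel elements, this is precisely the condition $x \in \bigcup_{|T| \le s} H_T + (\ker A \setminus \{0\})$.

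So the proof is essentially a chain of logical equivalences, and I would present it in that form. For the forward direction (contrapositive): if $x$ lies in the set on the right-hand side of \eqref{Eq3.1}, then $x = y + w$ with $y \in H_T$, $|T| \le s$, and $w \in \ker A \setminus \{0\}$; then $y = x - w$ satisfies $Ay = Ax - Aw = Ax$, and $y$ is $s$-sparse with $y \neq x$ (since $w \neq 0$), so $x$ is not the unique solution of $(\ell_0)$ — note that we should also check $\|y\|_0 \le s$, which holds because $y \in H_T$ with $|T| \le s$, and that $x$ being a solution at all just uses $Ax = b$. For the reverse direction: if $x$ is not the unique solution, pick another solution $v$ that is $s$-sparse (if $x$ is not even a minimizer, one still has some $s$-sparse $v$ with $\|v\|_0 \le \|x\|_0 \le s$ and $Av = Ax$, so this is fine); then $w := x - v$ is nonzero, lies in $\ker A$, and $x = v + w$ with $v \in H_T$ where $T = \operatorname{supp}(v)$, $|T| \le s$; hence $x$ belongs to the displayed union.

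The one point requiring a little care — and the only real obstacle, though it is minor — is the bookkeeping around support sizes: when we write $x = y + w$ we must be sure that the "sparse plane" containing $y$ has dimension at most $s$, and conversely that the alternate solution $v$ we extract really has $\|v\|_0 \le s$. Both are handled by working with $\bigcup_{|T|\le s} H_T$ rather than just $\mathcal H_s$, which is exactly why the statement is phrased with the inequality $|T| \le s$; the sets $H_T$ for $|T| < s$ sit on the boundary of the top-dimensional planes and account for the possibility that $x - w$ has strictly smaller support than $x$. With that convention in place the argument is a direct translation between "$\exists$ a second sparse preimage of $b$" and "$x$ differs from a sparse vector by a nonzero kernel element," and no estimates are needed.
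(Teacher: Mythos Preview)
Your proof is correct and follows essentially the same approach as the paper's: both argue via the equivalence that $x$ fails to be the unique $(\ell_0)$-solution exactly when there is an $s$-sparse $v\neq x$ with $Av=Ax$, and then translate this into $x = v + w$ with $v\in H_T$, $|T|\le s$, and $0\neq w\in\ker A$. Your added remarks about the bookkeeping on support sizes and the case where $x$ is not itself a minimizer are a bit more explicit than the paper's version, but the underlying argument is identical.
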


\begin{proof} Assume first that (\ref{Eq3.1}) holds. Suppose that $Ax = Av$ and $v$ is $s$-sparse. Then $A(x - v) = 0$. If $x \neq v$, then $x - v \in \ker{A} \setminus \left\{0\right\}$. But $v \in H_T$, so $x \in H_T +  (\ker{A} \setminus \left\{0\right\})$, which is a contradiction to (\ref{Eq3.1}).

\medskip

Conversely, suppose that $x$ belongs to the set in (\ref{Eq3.1}). Then $x = v+z$, $z \neq 0$, $v \in H_T$, $|T| \leq s$. We have that $A(x- v) = Az = 0$. This implies that $x$ cannot be the unique solution of $(\ell_0)$. This completes the proof. \end{proof}

\medskip

\begin{lemma}\label{lem3.2}
Let $A\in {\mathbb C}^{m\times N}$. Then $H_T \cap \ker{A} = \left\{0\right\}$ for any $H_T\in{\mathcal H}_s$ such that $s<\mbox{spark}(A)$.
\medskip
In particular, if $A$ is full spark, then all $H_T \cap \ker{A} = \left\{0\right\}$ for any $H_T\in{\mathcal H}_s$ and $s\le m$.
\end{lemma}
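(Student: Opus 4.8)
The plan is to argue directly from the definition of the spark, so no real machinery is needed. First I would suppose toward a contradiction that some $H_T \in \mathcal{H}_s$ with $s < \mathrm{spark}(A)$ contains a nonzero vector $x \in \ker A$. By the very definition of the coordinate plane $H_T$, such an $x$ is supported inside $T$, hence $\|x\|_0 \le |T| = s$. On the other hand $x$ is a nonzero element of $\ker A$, so the columns of $A$ indexed by $\mathrm{supp}(x)$ are linearly dependent, and the definition $\mathrm{spark}(A) = \min\{\|y\|_0 : Ay = 0,\ y \ne 0\}$ forces $\|x\|_0 \ge \mathrm{spark}(A)$. Chaining these two inequalities gives $\mathrm{spark}(A) \le s$, contradicting the hypothesis $s < \mathrm{spark}(A)$. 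Hence $H_T \cap \ker A$ contains no nonzero vector, which is exactly the assertion $H_T \cap \ker A = \{0\}$.

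For the ``in particular'' clause I would simply specialize: when $A$ is full spark we have $\mathrm{spark}(A) = m+1$ by definition, so every $s \le m$ satisfies $s \le m < m+1 = \mathrm{spark}(A)$; thus the first part applies to every $H_T \in \mathcal{H}_s$, yielding $H_T \cap \ker A = \{0\}$ for all such $T$ simultaneously.

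I do not anticipate a genuine obstacle here: the lemma is essentially a repackaging of the definition of spark, and the only point requiring a line of care is the bookkeeping step that a nonzero kernel vector living in $H_T$ certifies at most $s$ linearly dependent columns of $A$. The real payoff comes afterward — combined with Lemma \ref{lem3.1}, this lemma is what lets us understand the structure of the exceptional set $\bigcup_{|T| \le s} H_T + (\ker A \setminus \{0\})$ and ultimately bound its Lebesgue measure inside a fixed $s$-sparse plane.
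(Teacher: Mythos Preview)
Your proposal is correct and follows essentially the same approach as the paper: both arguments unpack the definition of spark, the paper phrasing it as ``$|T|<\mathrm{spark}(A)$ means the columns indexed by $T$ are linearly independent, so $A_T x=0$ forces $x=0$,'' while you phrase the contrapositive as ``a nonzero kernel vector supported in $T$ would witness $\mathrm{spark}(A)\le s$.'' The specialization to the full-spark case via $\mathrm{spark}(A)=m+1$ is identical.
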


\begin{proof} Since $|T| \leq s < \mbox{spark}(A)$, columns in $A$ indexed by $T$ are linearly independent. Thus if  $Ax = 0$ and $x \in H_T$, we would have  $A_Tx = 0$, as any $s$ columns of $A$ are linearly independent. This implies that  $x = 0.$
 \end{proof}

\medskip

Because of  the previous lemma, when $A$ is full-spark, we have a direct sum between the subspace $H_T$ and $\ker{A}$, we shall denote it as $H_T\oplus\ker{A}$. Hence, for every $x \in H_{T}\oplus\ker{A}$, we have that
\begin{equation}\label{Eq3.2}
x = y_x+z_x
\end{equation}
 for unique $y_x \in H_T$ and unique $z_x \in \ker{A}$.

\medskip
The following lemma will be needed in the proof.
\begin{lemma}\label{lem3.3}
Given $H_{T_0}$ in the Theorem \ref{MainTheorem1} and any $T\subset{\mathcal H}_s$, suppose that $H_{T_0}\subset H_T\oplus \ker{A}$. Define a map
$$
P: H_{T_0} \longmapsto  \ker{A}, \  Px = z_x,
$$
Then $z_x\in H_{T_0\cup T}$, $P$ is linear and $P$ is one-to-one in $H_{T_0 \setminus T}$.
\end{lemma}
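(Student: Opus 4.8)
The plan is to read off all three assertions from the uniqueness of the decomposition $x=y_x+z_x$ in $H_T\oplus\ker A$ that was set up just before the lemma. Throughout I would treat $H_{T_0}$, $H_T$ and $H_{T_0\cup T}$ as the corresponding coordinate \emph{subspaces} (all vectors whose support is contained in the index set) rather than the literal sets in the definition, since the requirement ``nonzero on the whole index set'' merely carves out an open dense piece and is irrelevant to a statement about a linear map; accordingly I would read $z_x\in H_{T_0\cup T}$ as $\operatorname{supp}(z_x)\subseteq T_0\cup T$, and ``$P$ one-to-one in $H_{T_0\setminus T}$'' as injectivity of $P$ restricted to the subspace of vectors supported in $T_0\setminus T$. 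First note that $P$ is well defined: the standing hypothesis $H_{T_0}\subset H_T\oplus\ker A$ supplies $y_x,z_x$ for each $x\in H_{T_0}$, and Lemma~\ref{lem3.2} — applicable since $|T|=s<m<\operatorname{spark}(A)$ for the full-spark $A$, so $H_T\cap\ker A=\{0\}$ — makes them unique.

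For the support claim I would write $z_x=x-y_x$ and observe $\operatorname{supp}(x)\subseteq T_0$, $\operatorname{supp}(y_x)\subseteq T$, hence $\operatorname{supp}(z_x)\subseteq T_0\cup T$. For linearity, given $x,x'\in H_{T_0}$ and scalars $\alpha,\beta$, the vector $\alpha x+\beta x'$ again lies in $H_{T_0}$ and
\[
\alpha x+\beta x'=(\alpha y_x+\beta y_{x'})+(\alpha z_x+\beta z_{x'}),
\]
with $\alpha y_x+\beta y_{x'}\in H_T$ and $\alpha z_x+\beta z_{x'}\in\ker A$; uniqueness of the decomposition forces its $\ker A$-component to be $\alpha z_x+\beta z_{x'}$, i.e.\ $P(\alpha x+\beta x')=\alpha Px+\beta Px'$. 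For injectivity on $H_{T_0\setminus T}$, by linearity it suffices to show that $Px=0$ together with $\operatorname{supp}(x)\subseteq T_0\setminus T$ forces $x=0$: from $Px=z_x=0$ we get $x=y_x\in H_T$, so $\operatorname{supp}(x)\subseteq T\cap(T_0\setminus T)=\emptyset$, whence $x=0$.

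I do not anticipate a genuine obstacle here; the single point worth flagging is that everything rests on the sum $H_T+\ker A$ being \emph{direct}, so that ``the $\ker A$-component'' of a vector is unambiguous — this is precisely where Lemma~\ref{lem3.2} and the full-spark hypothesis are used — and once that is in hand each of the three assertions is a one-line bookkeeping of supports.
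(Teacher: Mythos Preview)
Your proposal is correct and follows essentially the same route as the paper: both arguments read off the support claim from $z_x=x-y_x$, deduce linearity from uniqueness of the decomposition in $H_T\oplus\ker A$, and obtain injectivity on $H_{T_0\setminus T}$ from the observation that $Px=0$ forces $x=y_x\in H_T\cap H_{T_0\setminus T}=\{0\}$. Your treatment is slightly more careful in explicitly flagging the subspace-versus-open-set reading of $H_T$ and in invoking Lemma~\ref{lem3.2} for well-definedness, but the substance is the same.
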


\begin{proof} As $x\in H_{T_0}$ and $y_x\in H_T$, $z_x = x-y_x\in H_{T_0\cup T}$. The fact that $P$ is linear follows by the uniqueness of the representation. Indeed, $$
P(x_1 + x_2) = P((y_{x_1}+z_{x_1})+(y_{x_2}+z_{x_2}) = P((y_{x_1}+y_{x_2})+(z_{x_1}+z_{x_2})) = P(x_1) + P(x_2).
$$
 To see that it is one-to-one on $H_{T_0\setminus T}$, we suppose that $ P(x_1) = P(x_2) $  for   $x_1, x_2 \in H_{T_0 \setminus T}$, we have $P(x_1 - x_2) = 0$. Let $u = x_1 - x_2 = y_u+z_u$. Then We have $u - y_u = P(x_1-x_2)=0$. Thus $u = y_u \in H_{T_0 \setminus T} \cap H_T$ since $u\in  H_{T_0 \setminus T}$ and $y_u\in H_T$. But $H_{T_0 \setminus T} \cap H_T = \left\{0\right\}$, so we have $u = 0$. Thus $x_1 = x_2$.
\end{proof}

\medskip

Now we move on to the proof of the theorem:

\begin{proof}[Proof of Theorem \ref{MainTheorem1}] Let $H_{T_0} \in \mathcal{H}_s$ be a plane. Using Lemma \ref{lem3.1}, $x \in H_{T_0}$ is \textit{not} the unique solution of $(\ell_0)$ if and only if $x$ belongs to the union defined in (\ref{Eq3.1}). We now decompose the the union in (\ref{Eq3.1}) into two sets:

$$ X_1: = H_{T_0} \cap \bigg[ \underset{|T| \leq s \atop \#(T_0 \cup T) \leq m}\bigcup H_T + (\ker{A} \setminus \left\{0\right\})\bigg], \ X_2: =   H_{T_0} \cap \bigg[ \underset{|T| \leq s \atop \#(T_0 \cup T) > m}\bigcup H_T + (\ker{A} \setminus \left\{0\right\})\bigg].
$$
We first claim that $X_1 = \emptyset$. Indeed, if $x\in H_{T_0}$ and $x = v +z$ for some $v+z\in H_T + \ker{A}\setminus\{0\}$, then $A(x - v) = A_{T_0\cup T}(x-v) = 0$, where the sparsity of $(x - v) = \#(T_0 \cup T) \leq m$. But $A$ is full-spark and so these vectors indexed by $T_0\cup T$ are linearly independent, so $x=v$. This forces $X_1$ is empty.

\medskip

Hence, the union in (\ref{Eq3.1}) is just $X_2$. Suppose that this union has positive Lebesgue measure in $H_{T_0}$. We enlarge the set by considering the zero vector in.
$$
 H_{T_0} \cap (H_T + \ker{A} \setminus \left\{0\right\}) \subseteq H_{T_0} \cap (H_T + \ker{A}).
 $$
 Consider the set
 \begin{equation} \label{F}
 \mathcal{F} = \underset{|T| \leq s \atop \#(T \cup T_0) >m}\bigcup H_{T_0} \cap (H_T + \ker{A}).
 \end{equation}
 Then $\mathcal{F}$ has positive measure in $H_{T_0}$. There exists some $T$ such that $|T| \leq s$ and $\#(T \cup T_0) > m$ with the property that $|H_{T_0} \cap (H_T + \ker{A}| > 0$ in $H_{T_0}$, where $|\cdot|$ denotes the Lebesgue measure.
Since $H_{T_0} \cap (H_T + \ker{A})$ is a subspace of $H_{T_0}$, positive measure implies that $H_{T_0} \cap (H_T + \ker{A}) = H_{T_0}$. Thus $H_{T_0} \subseteq H_T + \ker{A}$.

\medskip

By Lemma \ref{lem3.3}, $P: H_{T_0 \setminus T} \longmapsto \ker{A} \cap H_{T_0 \cup T}$ is one-to-one. We now compute the dimensions of the following subspaces. Since $H_{T_0 \setminus T}$ is a hyperplane, we have
\begin{equation}
\dim{(P(H_{T_0 \setminus T}))} = \dim{(H_{T_0 \setminus T})} = \#T_0 \setminus T.
 \end{equation}
 Note that $\ker{A} \cap H_{T_0 \cup T} = \left\{x: A_{T_0 \cup T}x_{T_0 \cup T} = 0 \right\}$. Because $A$ is full-spark, we have
\begin{equation}\label{eq3.3}
  \dim{(\ker{A} \cap H_{T_0 \cup T})} = \#(T_0 \cup T) - m.
\end{equation}
But $P(H_{T_0 \setminus T}) \subseteq \ker{A} \cap H_{T_0 \cup T}$,
\[
  \dim{(P(H_{T_0 \setminus T}))} \leq \dim{(\ker{A} \cap H_{T_0 \cup T})}.
\]
Then we have that
$$
\#(T_0 \setminus T) \leq \#(T_0 \cup T) - m.
$$
Note that  $\#(T_0 \setminus T) = \#(T_0 \cup T) - \#T.$ and $\#T\le s$,
  $$
  \#(T_0 \cup T) - s \leq \#(T_0 \cup T) - \#T \ = \ \#(T_0 \setminus T) \leq \#(T_0 \cup T) - m.
  $$
  This forces $s \geq m$. This is a contradiction to our assumption of $s$. Hence,  we must have $|\mathcal{F}| = 0$, completing the proof. \end{proof}

\medskip

Indeed, we can replace  $m$ by spark$(A)-1$. We have the same conclusion as in Theorem \ref{MainTheorem1}

\begin{theorem}\label{MainTheorem2}
Let $A \in \C^{m \times N}$ and let $s$ be such that $$\frac{\mbox{\rm spark}(A)-1}{2} < s < \mbox{\rm spark}(A)-1.$$
Let $H_{T_0}$ be any hyperplane in $\mathcal{H}_s$. Then for almost everywhere $x \in H_{T_0}$ (with respect to the $s$-dimensional Lebesgue measure on $H_T$), $x$ is the unique solution of $(\ell_0)$.
\end{theorem}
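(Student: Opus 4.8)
The plan is to re-run the proof of Theorem \ref{MainTheorem1} almost verbatim, isolating the two places where full-spark was actually used and replacing the number $m$ by $\sigma := \mathrm{spark}(A) - 1$ throughout. Observe first that $\sigma \le \mathrm{rank}(A) \le m$, so the hypothesis $s < \sigma$ forces $s < \mathrm{spark}(A)$; this is exactly what Lemmas \ref{lem3.2} and \ref{lem3.3} need, so the direct-sum decomposition $H_T \oplus \ker A$ and the linear injection $P\colon H_{T_0\setminus T} \to \ker A \cap H_{T_0\cup T}$ remain available exactly as before.

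Next I would invoke Lemma \ref{lem3.1}: $x \in H_{T_0}$ fails to be the unique $(\ell_0)$ solution iff $x$ lies in $\bigcup_{|T|\le s} H_T + (\ker A \setminus \{0\})$, and I would split this union, intersected with $H_{T_0}$, into the part $X_1$ with $\#(T_0 \cup T) \le \sigma$ and the part $X_2$ with $\#(T_0\cup T) > \sigma$. Since any $\sigma = \mathrm{spark}(A)-1$ columns of $A$ are linearly independent, the computation showing $X_1 = \emptyset$ carries over word for word. Hence it remains to show that $\mathcal F = \bigcup_{|T|\le s,\ \#(T\cup T_0) > \sigma} H_{T_0}\cap(H_T + \ker A)$ is Lebesgue-null in $H_{T_0}$; if not, as in the original proof, some subspace $H_{T_0}\cap(H_T+\ker A)$ of positive measure must equal all of $H_{T_0}$, i.e.\ $H_{T_0}\subseteq H_T+\ker A$ for a fixed such $T$.

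The one genuinely new ingredient is the dimension count that previously used the identity \eqref{eq3.3}, $\dim(\ker A\cap H_{T_0\cup T}) = \#(T_0\cup T)-m$. For a matrix that is not full-spark this identity can fail, but only the inequality ``$\le$'' is ever used: since $\#(T_0\cup T) > \sigma$, any $\sigma$ of those columns are independent, so $\mathrm{rank}(A_{T_0\cup T}) \ge \sigma$ and therefore
\[
\dim(\ker A\cap H_{T_0\cup T}) = \#(T_0\cup T) - \mathrm{rank}(A_{T_0\cup T}) \le \#(T_0\cup T) - \sigma .
\]
Feeding this into $\dim P(H_{T_0\setminus T}) \le \dim(\ker A\cap H_{T_0\cup T})$ together with $\dim P(H_{T_0\setminus T}) = \#(T_0\setminus T)$ (from injectivity of $P$ on $H_{T_0\setminus T}$) and $\#(T_0\setminus T) = \#(T_0\cup T) - \#T \ge \#(T_0\cup T) - s$ gives $\#(T_0\cup T) - s \le \#(T_0\cup T) - \sigma$, hence $s \ge \sigma$, contradicting $s < \sigma$; so $|\mathcal F| = 0$. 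The main (and fairly mild) obstacle is precisely this step: recognizing that \eqref{eq3.3} must be relaxed to an inequality for general matrices, and verifying that the relaxed inequality still points in the direction needed to derive the contradiction — which it does, since the chain of inequalities in the proof of Theorem \ref{MainTheorem1} only ever exploited the upper bound anyway. Everything else is a transcription of that proof.
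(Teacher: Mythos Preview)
Your proposal is correct and follows essentially the same approach as the paper: rerun the proof of Theorem \ref{MainTheorem1} with $m$ replaced by $\sigma=\mathrm{spark}(A)-1$, the key modification being that the equality \eqref{eq3.3} is relaxed to the inequality $\dim(\ker A\cap H_{T_0\cup T})\le \#(T_0\cup T)-\sigma$ via the rank bound $\mathrm{rank}(A_{T_0\cup T})\ge\sigma$. If anything you are more explicit than the paper, which does not spell out that the $X_1/X_2$ split must be taken at $\sigma$ rather than at $m$ for the argument that $X_1=\emptyset$ to go through.
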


\begin{proof}
Following the same proof of Theorem \ref{MainTheorem1} until (\ref{eq3.3}), we note that
$$
\mbox{rank}(A_{T_0\cup T}) \ge \mbox{spark}(A)-1.
$$
 Otherwise, if $\mbox{rank}(A_{T_0\cup T})+1<\mbox{spark}(A)$, then $A_{T_0\cup T}$ will contain rank$(A_{T_0\cup T})$+1 linearly independent vectors, which contradicts to the definition of the rank. Hence,
$$
\dim{(\ker{A} \cap H_{T_0 \cup T})} = \#(T_0 \cup T) - \mbox{rank}(A_{T_0\cup T})\le \#(T_0 \cup T) -( \mbox{spark}(A)-1).
$$
Continuing the same proof and we finally arrives at $s\ge \mbox{spark}(A)-1$, contradicting our initial assumption.
\end{proof}

\begin{remark}
{\rm As we have indicated in the introduction, uniqueness solution is possible over the traditional regime of compressed sensing. From the numerical result through a random choice of initial $s$-sparse $x$, the measure theoretical conclusion shows that the probability of choosing $x$ as a non-unique solution of $(\ell_0)$ is 0. Therefore, recovery of $x$ in $m/2<s<m$ is possible.

\medskip

Comparing the basis pursuit which fails far before $m/2$, our $\ell_1$ tail minimization approach is not only capable of recovery vectors of near spark-level sparsity, but also going over to recover signals in $s>m/2$ regime.}
\end{remark}
%

\section{Failure of Basis Pursuit}

In this section, we provide a justification that the basis pursuit problem (1.2) has no unique solution on a significantly large set when $\frac{m}{2} < s < m$ and $A$ is a sensing matrix with real-valued entries. We first recall that the null space property is a necessary and sufficient condition for recovery of signals via basis pursuit.

\begin{definition}
A matrix $A \in \C^{m \times N}$ satisfies the null space proprty (NSP) with respect to $T$ if $\|v_T\|_1 < \|v_{T^C}\|_1$ for every $v \in \ker{A} \setminus \left\{0 \right\}$.
\end{definition}

\begin{theorem}
Every $s$-sparse vector is the unique solution to $(1.2)$ if and only if $A$ has the NSP for all $T$ with $|T| \leq s$.
\end{theorem}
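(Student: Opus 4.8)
The plan is to prove both implications by the standard reverse--triangle--inequality argument, using that the constraint $b = Ax$ forces any competitor $z$ to differ from the true signal by a vector in $\ker A$, exactly as in the proof of Lemma \ref{lem3.1}.

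For sufficiency ($\Leftarrow$), suppose $A$ has the NSP for all $T$ with $|T|\le s$. Fix an $s$-sparse vector $x$, let $S$ be its support, so $|S|\le s$, and let $z\ne x$ be any vector with $Az = Ax$. Then $v:=x-z\in\ker A\setminus\{0\}$. Splitting the $\ell_1$-norm of $z$ over $S$ and $S^C$: on $S^C$ we have $z_{S^C} = -v_{S^C}$ since $x$ vanishes there, so $\|z_{S^C}\|_1 = \|v_{S^C}\|_1$; on $S$ the reverse triangle inequality gives $\|z_S\|_1 = \|x_S - v_S\|_1 \ge \|x\|_1 - \|v_S\|_1$. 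Adding these yields $\|z\|_1 \ge \|x\|_1 - \|v_S\|_1 + \|v_{S^C}\|_1 > \|x\|_1$, where the last step is the NSP applied with $T = S$. Hence $x$ is the strict, and therefore unique, minimizer of $(1.2)$.

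For necessity ($\Rightarrow$), suppose every $s$-sparse vector is the unique solution of $(1.2)$. Fix $v\in\ker A\setminus\{0\}$ and $T$ with $|T|\le s$, and test the hypothesis on the $s$-sparse vector $x:=v_T$. Since $Av=0$, the vector $z:=-v_{T^C}$ satisfies $Az = Av_T = Ax$, and $z\ne x$ because $x - z = v \ne 0$. The assumed uniqueness of $x$ as the $\ell_1$-minimizer then forces $\|v_T\|_1 = \|x\|_1 < \|z\|_1 = \|v_{T^C}\|_1$, which is precisely the NSP for $T$. As $T$ was an arbitrary index set of size at most $s$, $A$ has the NSP for all such $T$.

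There is no serious obstacle here; the only points needing care are keeping every inequality strict so that "minimizer" is upgraded to "unique minimizer," and, in the necessity direction, verifying $z\ne x$ so that the uniqueness hypothesis genuinely applies. It is also worth remarking that it suffices to check the NSP on index sets of size exactly $s$, and that this equivalence is the $(\ell_1)$-analogue of the spark condition in Lemma \ref{lem3.1} for $(\ell_0)$.
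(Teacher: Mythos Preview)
The paper does not actually prove this theorem; it is quoted without proof as a standard result (it is, for instance, Theorem~4.5 in \cite{FR13}), and the paper only uses it as input to Proposition~\ref{prop_largeS_NSPfails} and Theorem~\ref{th3.1}. Your argument is correct and is precisely the standard proof of the null-space-property characterization, including the care you take with strictness and with verifying $z\ne x$ in the necessity direction.
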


 When the NSP of order $s$ holds, then every $s$-sparse vector is the unique solution to $(\ell_0)$. Because if $z$ is a solution to
\begin{equation*}
(\ell_0) \quad \quad \underset{x} {\text{min}} \ \|x\|_0 \ \ \text{subject to} \ b = Ax,
\end{equation*}
and $x$ is the solution to $(\ell_1)$, then $\|z\|_0 \leq \|x\|_0 \leq s$. But every $s$-sparse vector is the unique solution to $(\ell_1)$, $x=z$.

\begin{proposition}\label{prop_largeS_NSPfails}
If $\tfrac{m}{2} < s < m$, then there exists some $|T| \leq s$ such that  NSP  fails.
\end{proposition}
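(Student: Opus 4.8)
The plan is to exhibit a single index set $T$ with $|T|\le s$ together with a single nonzero kernel vector $v$ witnessing the failure of NSP at $T$, i.e.\ a $v\in\ker A\setminus\{0\}$ with $\|v_T\|_1\ge\|v_{T^C}\|_1$ --- this is precisely the negation of the requirement ``$\|v_T\|_1<\|v_{T^C}\|_1$ for every $v\in\ker A\setminus\{0\}$'' in the definition of NSP with respect to $T$.

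First I would produce a \emph{sparse} kernel vector, because a generic element of $\ker A$ has nearly full support and cannot witness the failure at any small $T$. In the compressed sensing regime $m<N$, so $\ker A\ne\{0\}$; choose $v\in\ker A\setminus\{0\}$ of minimal support, so that $\|v\|_0=\mathrm{spark}(A)$. Since $A$ has only $m$ rows, $\mathrm{rank}(A)\le m$, hence any $m+1$ columns of $A$ are linearly dependent and $\mathrm{spark}(A)\le m+1$. The hypothesis $s>\tfrac m2$, together with integrality of $m$ and $s$, gives $2s\ge m+1$, so $\|v\|_0=\mathrm{spark}(A)\le 2s$.

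Next I would split the support as evenly as possible. Put $S=\mathrm{supp}(v)$, so $|S|\le 2s$, and order the nonzero coordinates of $v$ by decreasing modulus. Let $T$ be the set of indices of the $\lceil |S|/2\rceil$ coordinates of largest modulus; then $|T|=\lceil |S|/2\rceil\le s$, using $|S|\le 2s$ and that $s$ is an integer. Since $v$ vanishes off $S$, we have $\|v_{T^C}\|_1=\|v_{S\setminus T}\|_1$, and $S\setminus T$ consists of the $\lfloor |S|/2\rfloor$ coordinates of smallest modulus. Because $|T|\ge|S\setminus T|$ and every coordinate in $T$ has modulus at least that of every coordinate in $S\setminus T$, a term-by-term comparison yields $\|v_T\|_1\ge\|v_{S\setminus T}\|_1=\|v_{T^C}\|_1$. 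Hence NSP fails with respect to this $T$, which proves the proposition.

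The only delicate point is the first step: one must pass to a minimal-support kernel vector and bound its support by $2s$, and this is exactly where $\mathrm{spark}(A)\le m+1$ and $2s>m$ enter. Everything afterward is the elementary observation that the ``larger half'' of a finite list of nonnegative numbers carries at least half of the total $\ell_1$ mass.
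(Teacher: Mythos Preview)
Your proof is correct and takes a different route from the paper's. The paper argues by contraposition at a high level: if NSP held for every $|T|\le s$, then by Theorem~3.2 every $s$-sparse vector would be the unique $\ell_1$ minimizer, hence also the unique $\ell_0$ minimizer, and Lemma~1.1 would force every $2s$ columns of $A$ to be linearly independent, giving $m\ge 2s$ and contradicting $s>m/2$. Your argument is instead direct and constructive: you pick a minimal-support kernel vector $v$ with $\|v\|_0=\mathrm{spark}(A)\le m+1\le 2s$, take $T$ to be the ``larger half'' of its support ordered by magnitude, and read off $\|v_T\|_1\ge\|v_{T^C}\|_1$. The paper's two-line proof is quicker but leans on the previously assembled equivalence theorems; your version is self-contained and actually exhibits the witnessing pair $(T,v)$, which is exactly the data used at the start of the proof of Theorem~\ref{th3.1}.
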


\begin{proof}
If the NSP holds on all $|T| \leq s$, then every $s$-sparse vector on $T$ is the unique solution of $(\ell_1)$ and hence $(\ell_0)$. This will require $m \geq 2s$, which is a contradiction.
\end{proof}

\begin{proposition}\label{prop_singleVecRecovery} \cite[Theorem 4.30]{FR13}
Given a matrix $A\in {\mathbb R}^{m\times N}$. A vector $x\in{\mathbb R}^N$ with support $S$ is the unique minimizer of $(\ell_1)$ if and only if $$
\left|\sum_{j\in S}\mbox{sgn}(x_j)v_j\right|<\|v_{S^{c}}\|_1, \ \forall v\in \mbox{ker}(A)\setminus\{0\},
$$
where
\[
\text{\rm sgn}(x) = \frac{x}{|x|} =  \begin{cases}
      1 & \textrm{ if $x > 0$} \\
      -1 & \textrm{ if $x < 0$} \\
        0 & \textrm{ if $x = 0$.} \\
   \end{cases}
\]
\end{proposition}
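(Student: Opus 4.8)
The plan is to reduce everything to the obvious description of the feasible set: since $Az=Ax$ exactly when $z=x+v$ with $v\in\ker A$, the vector $x$ is \emph{a} minimizer of $(\ell_1)$ precisely when $\|x+v\|_1\ge\|x\|_1$ for all $v\in\ker A$, and the \emph{unique} minimizer precisely when that inequality is strict for all $v\in\ker A\setminus\{0\}$. Writing $S$ for the support of $x$, I would split $\|x+v\|_1=\sum_{j\in S}|x_j+v_j|+\|v_{S^{c}}\|_1$ and use the elementary one-sided estimate $|a+b|\ge|a|+\mathrm{sgn}(a)\,b$, valid for every real $a\neq0$ and every $b$ (with equality as soon as $|b|\le|a|$), applied coordinatewise on $S$.

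For the ``if'' direction, assume the displayed inequality and fix $v\in\ker A\setminus\{0\}$. Summing the estimate over $j\in S$ and using $x_j=0$ off $S$ gives
\[
\|x+v\|_1\ \ge\ \|x\|_1+\sum_{j\in S}\mathrm{sgn}(x_j)v_j+\|v_{S^{c}}\|_1 .
\]
Because $\ker A$ is a subspace, the hypothesis applies to $-v$ as well, so $\bigl|\sum_{j\in S}\mathrm{sgn}(x_j)v_j\bigr|<\|v_{S^{c}}\|_1$; in particular $\sum_{j\in S}\mathrm{sgn}(x_j)v_j+\|v_{S^{c}}\|_1>0$, whence $\|x+v\|_1>\|x\|_1$. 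Thus $x$ is the unique minimizer. (The hypothesis also forbids $v_{S^{c}}=0$ for $v\neq0$, i.e.\ it silently encodes linear independence of the columns of $A_S$, but that need not be invoked separately.)

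For the ``only if'' direction I would argue by contraposition. Suppose some $v\in\ker A\setminus\{0\}$ fails the inequality, so $\bigl|\sum_{j\in S}\mathrm{sgn}(x_j)v_j\bigr|\ge\|v_{S^{c}}\|_1$; after replacing $v$ by $-v$ if necessary, assume $\sum_{j\in S}\mathrm{sgn}(x_j)v_j\ge\|v_{S^{c}}\|_1$. Since $S$ is finite and $x_j\neq0$ on $S$, pick $t>0$ small enough that $\mathrm{sgn}(x_j-tv_j)=\mathrm{sgn}(x_j)$ for every $j\in S$. Then $x-tv$ is feasible and
\[
\|x-tv\|_1\ =\ \|x\|_1-t\Bigl(\sum_{j\in S}\mathrm{sgn}(x_j)v_j-\|v_{S^{c}}\|_1\Bigr)\ \le\ \|x\|_1 ,
\]
with strict inequality when the parenthesised quantity is positive (so $x$ is not even a minimizer) and equality when it vanishes (so $x-tv\neq x$ is a second minimizer); in either case $x$ is not the unique minimizer, which is the desired contrapositive.

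I expect no substantive analytic obstacle beyond the subgradient inequality for $|\cdot|$; the steps needing care are purely organizational — exploiting the symmetry $v\leftrightarrow-v$ of $\ker A$ to turn the signed sum into its absolute value, separating the necessity argument into the ``strict decrease'' and ``tie'' cases, and observing that the perturbation parameter $t$ may be taken small enough to freeze the signs of the $S$-coordinates of $x-tv$.
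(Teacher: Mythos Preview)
The paper does not supply its own proof of this proposition; it is simply quoted from \cite[Theorem 4.30]{FR13}. Your argument is correct and is essentially the standard proof one finds there: reduce unique $\ell_1$-minimality to the strict inequality $\|x+v\|_1>\|x\|_1$ for every nonzero $v\in\ker A$, obtain sufficiency from the coordinatewise subgradient bound $|a+b|\ge|a|+\mathrm{sgn}(a)\,b$ together with the $v\leftrightarrow -v$ symmetry of the kernel, and obtain necessity by the small-$t$ perturbation $x-tv$ that freezes the signs on $S$ and produces either a strictly smaller $\ell_1$-norm or a distinct tied minimizer. There is nothing to correct.
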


It turns out, as $\tfrac{m}{2} < s < m$, we can find some $T_0$ with $|T_0| \leq s$ such that the NSP with respect to $T_0$ fails.

\begin{theorem}\label{th3.1}
Let $A\in {\mathbb R}^{m\times N}$. If $\tfrac{m}{2} < s < m$, then there is a set of infinite measure on $H_{T_0}$ such that $(\ell_1)$ cannot succeed at recovery.
\end{theorem}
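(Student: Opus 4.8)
The plan is to derive the failure from the two characterizations already available: Proposition~\ref{prop_largeS_NSPfails}, which locates an index set on which the NSP breaks, together with Proposition~\ref{prop_singleVecRecovery}, which promotes a single kernel vector into an entire region of signals that basis pursuit cannot pin down. First, since $\tfrac m2 < s < m$, Proposition~\ref{prop_largeS_NSPfails} supplies a set $T_0$ with $|T_0|\le s$ for which the NSP fails; unravelling the definition, this means there is a vector $v\in\ker A\setminus\{0\}$ with $\|v_{T_0}\|_1\ge\|v_{T_0^C}\|_1$. Replacing $T_0$ by any superset of size exactly $s$ only enlarges $\|v_{T_0}\|_1$ and shrinks $\|v_{T_0^C}\|_1$, so the same inequality persists and we may assume $|T_0|=s$, i.e. $H_{T_0}\in\mathcal H_s$ and ``measure'' refers to the $s$-dimensional Lebesgue measure on $H_{T_0}$. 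Observe also that $v_{T_0}\ne0$, since otherwise both sides of the inequality vanish and $v=0$.

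Fix such a $v$. The key step is the elementary identity: for every $x\in H_{T_0}$ whose signs agree with those of $v$ on the support of $v_{T_0}$ --- that is, $\mathrm{sgn}(x_j)=\mathrm{sgn}(v_j)$ for every $j\in T_0$ with $v_j\ne0$ --- one has
\[
\Bigl|\sum_{j\in T_0}\mathrm{sgn}(x_j)\,v_j\Bigr|
 \;=\;\sum_{j\in T_0}|v_j|\;=\;\|v_{T_0}\|_1\;\ge\;\|v_{T_0^C}\|_1,
\]
since the indices $j\in T_0$ with $v_j=0$ contribute $0$ regardless of $\mathrm{sgn}(x_j)$. Because the support of any $x\in H_{T_0}$ is exactly $T_0$, the contrapositive of Proposition~\ref{prop_singleVecRecovery} (applied with $S=T_0$) shows that such an $x$ is not the unique minimizer of $(\ell_1)$ subject to $Az=Ax$; hence basis pursuit fails to recover $x$ from $b=Ax$.

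It then remains only to check that
\[
E:=\bigl\{\,x\in H_{T_0}:\ \mathrm{sgn}(x_j)=\mathrm{sgn}(v_j)\ \text{for all } j\in T_0 \text{ with } v_j\ne0\,\bigr\}
\]
has infinite $s$-dimensional Lebesgue measure. Identifying $H_{T_0}$ with the set of vectors in $\R^s$ (coordinates indexed by $T_0$) having all coordinates nonzero, $E$ consists of the points with prescribed sign on those coordinates where $v_j\ne0$ and arbitrary nonzero value on the rest: a union of relatively open orthants with the coordinate hyperplanes removed. Such a set is unbounded and of positive measure, hence of infinite measure, and the theorem follows. I do not foresee a substantive obstacle: the only points that need care are the innocuous reduction to $|T_0|=s$ so that the statement reads literally, the bookkeeping for coordinates with $v_j=0$ in the sign-matching identity, and invoking Proposition~\ref{prop_singleVecRecovery} in contrapositive form; the rest is routine.
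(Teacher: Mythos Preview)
Your proof is correct and follows essentially the same route as the paper's: use Proposition~\ref{prop_largeS_NSPfails} to obtain $T_0$ and a kernel vector $v$ violating the NSP inequality, then invoke Proposition~\ref{prop_singleVecRecovery} in contrapositive form on the open orthant of $H_{T_0}$ whose signs align with $v_{T_0}$. Your version is slightly more careful on two bookkeeping points the paper leaves implicit --- enlarging $T_0$ to have size exactly $s$, and handling indices $j\in T_0$ with $v_j=0$ --- but the argument is otherwise identical.
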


\begin{proof}
Since $\tfrac{m}{2} < s < m$, by  Proposition \ref{prop_largeS_NSPfails}, the NSP for some $T_0$ fails. There exists some $v \in \ker{A} \setminus \left\{0\right\}$ such that $\|v_{T_0}\|_1 \geq \|v_{T_0^C}\|_1$. By  Proposition \ref{prop_singleVecRecovery}, $x \in H_{T_0}$ is recovered as a unique solution of $(\ell_1)$ if and only if
\begin{equation}\label{eqn_singleVecRecovery}
 | \sum_{j \in T_0} \text{sgn}(x_j)v_j| < \|v_{T_0^C}\|_1.
\end{equation}
  Now, $\|v_{T_0}\|_1 \geq \|v_{T_0^C}\|_1$ for some $v \in \ker{A}\setminus\{0\}$. If we consider $x \in H_{T_0}$ such that sgn($x_j)v_j = |v_j|$, then the left hand side of (\ref{eqn_singleVecRecovery}) becomes $  \sum_{j \in T_0} |v_j| \geq \|v_{T_0^C}\|_1$. This shows all such $x$ are not recovered by $(\ell_1)$.

Let $(\epsilon_j)_{j \in T_0}$ be the sign such that $\epsilon_j v_j = |v_j|$. Then the failure set of $(\ell_1)$ contains $\left\{x \in H_{T_0}: \text{sgn}(x_j) = \epsilon_i \ \forall i \right\}$. This contains at least a quadrant in $H_{T_0}$, which is of infinite measure. Thus the statement holds.
\end{proof}

\medskip

\begin{remark}
The necessity of the {\em Proposition \ref{prop_singleVecRecovery}} is not true over the complex field \cite[Remark 4.29]{FR13}. It is not known for now whether Theorem \ref{th3.1} holds for ${\mathbb C}^{N}$ and complex matrices $A$. 

Nonetheless, the theorem shows that in the real case, $(\ell_0)$ and $(\ell_1)$ are not equivalent on a significant set of vectors when $s>m/2$.  We conclude the article by stating again that the $\ell_1$ tail minimization can still recover sparse signals for $m/2 < s < m$.
\end{remark}




\end{document}